\theoremstyle{plain}
\newtheorem{thm}{Theorem}[section]
\newtheorem{prop}[thm]{Proposition}
\theoremstyle{definition}
\theoremstyle{remark}
\newtheorem{Remark}{\bf Remark}[section]
\theoremstyle{remark}
\newtheorem{com*}{\bf Comment}
\def \newequation#1#2{
 \@definecounter{#1}
 \@namedef{the#1}{\hbox{#2}}
 \@namedef{#1}{$$\refstepcounter{#1}}
 \@namedef{end#1}{
    \eqno \csname the#1\endcsname $$\global\@ignoretrue
    }
}
\newcommand{\E}{\mathds E}
\title{Long versus Short Time Scales: \\ the Rough Dilemma and Beyond}
\author{Matthieu Garcin\thanks{L\'eonard de Vinci P\^ole Universitaire, Research Center, 92916 Paris la D\'efense. Email: matthieu.garcin@devinci.fr.}
\and
Martino Grasselli \thanks{Department of Mathematics ``Tullio Levi Civita'',
University of Padova, via Trieste 63, 35121 Padova, Italy, and L\'eonard de Vinci P\^ole Universitaire, Research Center, 92916 Paris la D\'efense. Email: grassell@math.unipd.it. 
\newline 
Acknowledgments: We thank Elisa Al\`os, Fabienne Comte, Christa Cuchiero, Eva Flonner, Gilles Pag\`es, Andrea Pallavicini, and Mathieu Rosenbaum for useful comments on a preliminary version. We also thank the participants of the 2019 Quantitative Methods in Finance conference, Sydney for useful comments.
}
}
\date{2021}
\begin{document}
\maketitle
\begin{abstract}
Using a large dataset on major FX rates, we test the robustness of the rough fractional volatility model over different time scales, by including smoothing and measurement errors  into the analysis. Our findings lead to new stylized facts in the log-log plots of the second moments of realized variance increments
against lag which exhibit some convexity in addition to the roughness and stationarity of the volatility. The very low perceived Hurst exponents at small scales is consistent with the rough framework, while the higher perceived Hurst exponents for larger scales leads to a nonlinear behavior of the log-log plot that has not been described by models introduced so far. \end{abstract}

{\bf 2010 Mathematics Subject Classification}. 60F10, 91G99, 91B25.

{\bf Keywords}: Fractional Brownian motion, rough volatility, realized variance, time series, intra-day data.

\section{Introduction}

It is well known that a constant volatility is not consistent with time series data
nor implied volatility surfaces. 
Several stochastic volatility models have been introduced in the last decades in order to reproduce the stylized facts of time series observed for both historical and implied volatility, see e.g. \cite{SS}, \cite{H}, \cite{BNS01}, \cite{SABR}, affine models like in \cite{DFS}, ARCH, GARCH, their non-parametric extensions \cite{GG}, and many others.
In a different (implied volatility) perspective, \cite{D} assumed that the volatility becomes a deterministic
function of time and of the current state of the asset, thus leading to the local volatility approach, which provides a theoretically perfect reproduction of the implied volatility surface.

All these approaches adopt the classic Brownian framework for
the noise of both the underlying and its volatility.
Motivated by an apparent presence of long memory in the volatility process, see e.g. \cite{L}, some researchers (see e.g. \cite{DGE}, \cite{Baillie}, \cite{BM}, \cite{AB}, \cite{BCD}...) modelled the log-volatility noise with a fractional Brownian motion, leading to the so called fractional stochastic volatility (FSV) model\footnote{A fractional Brownian motion (fBm) $B_H$ of Hurst exponent $H\in (0,1)$ is a Gaussian process with a non trivial covariance function, namely  a non Markovian process that allows for long or short memory, according to resp. $H>0.5$ or $H<0.5$. The case $H=0.5$ corresponds to the classic (Markovian) Brownian motion. 
}.

\cite{CR} suggested a model where the driving
fractional Brownian motion has Hurst parameter $H >0.5$, in
order to take into account for the stylized fact suggesting that the volatility is a  long memory process.
Their estimation procedure for the Hurst parameter was based on the method of \cite{GPH}, involving the slow decay of the autocorrelation function (which is supposed to be of power law with exponent less than one)  and it reveals to be problematic, as the asymptotic behaviour of the
covariance function cannot be directly computed without assuming a specific functional form. \\

Recently, a new paradigm for the volatility process has been introduced by  \cite{GJR}, who affirmed a universal phenomenon: volatility is rough and cannot be described by a SDE driven by a classic Brownian motion. In particular, they showed that the autocorrelation function of the volatility does not behave as a power law, at least at the time scales  ranging from one day till 2 months considered in  their observation, so that they disentangled the question about the long memory of the volatility from the asymptotic behaviour of the autocorrelation function. Then,   \cite{GJR} introduced a model, where the logarithm of the volatility is driven by a fractional Brownian motion (fBm) with a Hurst exponent that is empirically found to be  very low, thus leading to rough trajectories for the volatility. For this reason, the approach is referred to as the rough fractional stochastic volatility model (RFSV). Using  absolute moments estimation on a wide range of scales (from 1 day to approximatively 50 days), the Hurst exponent is found to be close to 0.14 both for the log-volatility of S\&P 500 and the NASDAQ, together with other major indexes. The series of volatilities used in the seminal paper \cite{GJR} come from the realized variance estimates from the Oxford-Man Institute of Quantitative Finance Realized Library
\footnote{\url{http://realized.oxford-man.ox.ac.uk/data/download}. The Oxford-Man Institute's Realized Library contains a selection of daily non-parametric estimates of volatility of financial assets, including realized variance and realized kernel estimates.}, between January 3, 2000 and March 31, 2014. In their approach, the daily squared volatility is estimated using the quadratic variations of the log-prices at a five-minute frequency (about 96 observations per day, in total about 3500 days). Another fundamental result in \cite{GJR} states that the estimation of the Hurst exponent $H$ is robust across time, scales and markets (equity indexes and FX). Before this empirical evidence of low Hurst exponents, based on series of historical volatilities, \cite{ALV} suggested that values of $H$ below $1/2$ should reproduce another stylized fact regarding implied volatilities, namely the short-date skew. \\

One may wonder if the results of \cite{GJR} depend on the  particular estimation procedure adopted. There are indeed different ways to compute the volatility proxies (see e.g. the Fourier-based estimators for the realized variance in \cite{CT} or the Parkinson estimator \cite{Parkinson} that one can  adopt when the realized variance is not available), or methods that even  circumvent the absolute moment estimation procedure, like the Whittle-type estimation methods used in \cite{FTW} to find directly the Hurst exponent. However, it is possible to show that all these methods lead to qualitatively similar results. We will show this and investigate daily time series on equity indexes using the Parkinson estimator in a forthcoming paper. \\

We thus observe two branches of the literature of fractional volatility leading to opposite conclusions:
\begin{itemize}
\item The traditional econometric approach, which studies the speed of the decay of the autocovariance function, concludes that there is long memory.
\item The rough volatility approach, in which the Hurst exponent is estimated from scaling properties of the series, uses thus estimators from literature of econophysics and of the statistics of stochastic processes. It concludes that there is no long memory.
\end{itemize}
In these two approaches, the tools are not the same, the range of scales analysed may differ, and the conclusions diverge. \\

A paper that somehow adopts both approaches is \cite{BLP}, where the authors present a two-factor stochastic volatility model which is rough at the short time scales, but it presents stationarity at longer time scales, according to the traditional approach that considers the autocovariance function. This mixing effect leads to an effective Hurst parameter varying on different observation time scales. In particular, the authors find a Hurst parameter for the S\&P 500 index ranging from few cents up to 0.2, as the observation time scale grows from one minute to few hundreds minutes. 
It is worth noticing that \cite{BLP} do consider just intra-day data: it would be interesting to see if their model is able to reproduce the stylized facts that we empirically observe also for longer time scales with daily data as we are going to describe. \\

Despite some technical issues arising from the fact that a volatility process driven by a fractional Brownian motion is not a semi-martingale (and the corresponding integrals require particular care), rough volatility models have been largely investigated in recent theoretical and empirical literature, to the point that today one can find more than one hundred papers on the subject\footnote{See e.g. the papers on the website \url{https://sites.google.com/site/roughvol/home}.}. Surprisingly,  the empirical investigation basically relies on the  dataset from the Oxford-Man Institute's Realized Library, 
which is indeed very useful, but  far to be the most complete.\\

The paper has two main contributions: 
the first is a modification of \cite{GJR}'s method taking errors
into account, and the second is to report some findings from FX data.
As volatility is unobserved, we include the estimation error into the analysis. In fact, what we observe is only noisy volatility.   We model two types of noise, coming from  measurement and smoothing (arising from the fact that we replace the unobserved spot variance with the integrated variance estimator). We quantify the impact of all these noises in the estimation of the Hurst exponent and we filter out their effects. What remains after the filtering  still deviates from the straight line behaviour predicted by the rough volatility model paradigm.
In particular, we extend previous empirical studies to a wider range of time scales, beyond the ones investigated in \cite{GJR}, in order to check whether data are consistent with the scaling properties predicted by fractional volatility models. The most striking difference with these models is a strong convexity of the log-log plot. In other words, the Hurst exponent perceived at small scales is lower than the Hurst exponent perceived at larger scales. We also include the mean reversion of the volatility in our investigation: when  we consider large time scales, we  expect to be able to observe the stationarity features that were precluded in the previous studies, since the time windows investigated so far were too small compared to the mean reversion frequency. 
In order to be able to observe such mean reversion effect, we obviously need long time series for the underlyings. 
Our results show that the presence of only one fBm is not enough in order to meet all the stylized facts, convexity and stationarity, empirically observed in our dataset. The main conclusion is that, by broadening the ranges of scales, we  highlight new stylized facts about the scaling of volatility processes that cannot be reproduced by the usual rough volatility model driven by a fractional Brownian motion. \\
\\



The paper is organised as follows:  in Section \ref{Hurst}, we review  \cite{GJR} and other preceding studies
on the estimation of the Hurst parameter, such as \cite{FTW} that pointed out some source of spurious roughness.  Then, we propose a  modification that includes the measurement and smoothing errors, with some reasoning supported by a simulation
study. 
We re-examin the analysis of \cite{GJR} by including errors in Section \ref{sec:noise}, where we propose some ways to filter out the errors.
Finally, in Section \ref{sectionFX} we perform an empirical analysis
on FX for a large dataset and we provide evidence of new stylized facts that are not present in past literature.
Section \ref{Conclusion} concludes.

\section{Estimation of the Hurst exponent}\label{Hurst}

\subsection{Absolute moments estimation of the Hurst exponent}

Consider  the dynamics of a price process $S$ that evolves according to the SDE
\begin{equation}\label{SSDE}
dS_t=(.)dt+S_t \sigma_t dB_t, t \geq0,
\end{equation}
where $B$ is a standard Brownian motion defined in a probability space that satisfies the usual technical conditions. The process $S$ is assumed to be a semimartingale in order to avoid arbitrage opportunities. 

The volatility price process $\sigma$ is not directly observable, one can only deduce indirectly its properties through some observable proxies like the realized variance process, defined as 
\begin{equation}\label{realizedvariance}
\widehat \sigma^2_{\delta, t}=\sum_{(t-1)\delta\leq u\leq t\delta} \vert \Delta \log \overline S_u\vert^2,
\end{equation}
where $\overline S$ is a piecewise constant process which jumps at every sampling time of $S$ to the observed value of $S$ at the time.
If there is no measurement error and the sampling frequency goes to infinity we have that
\begin{equation*}
\widehat \sigma^2_{\delta, t}\rightarrow \int_{(t-1)\delta}^{ t\delta} \sigma^2_u du,
\end{equation*}
in probability, which justifies the choice of \eqref{realizedvariance} for a proxy of the realized variance process, as well as its square root for the daily spot volatility, in the case where $\delta$ corresponds to the length of one day.

In \cite{GJR}, the authors performed a linear regression in order to fit the empirical absolute moment of order $k$ of the log-volatility, defined as
\begin{equation}\label{kmoments}
\frac{1}{n}\sum_{t=1}^{n}{|\log\widehat \sigma_{\delta, t+\tau}-\log\widehat \sigma_{\delta, t}|^k}.
\end{equation}

They found a good fit, for different values of $k$, with the function
\begin{equation}\label{kmoments}
\log \frac{1}{n}\sum_{t=1}^{n}{|\log\widehat \sigma_{\delta, t+\tau}-\log\widehat \sigma_{\delta, t}|^k} \approx kH\log \tau + \eta_k,
\end{equation}
for a very small value of $H\approx 0.1$.
 
This special scaling property, together with some empirical stylized facts on the Gaussian nature of the log-variance (see e.g. \cite{ABDL01}), induced \cite{GJR} to assume a particular SDE for the log-volatility of the form
\begin{equation}\label{volaH}
d\log  \sigma^2_t = \eta dB^H_t,
\end{equation}
where $\eta$ is a constant and $B^H$ is a fractional Brownian motion with Hurst parameter $H$.

In fact, a fractional Brownian motion (fBm) of Hurst exponent $H\in(0,1)$ and scale parameter $\eta^2$ has stationary increments that satisfy
\begin{equation}\label{eq:absMom}
\E\left[|B^H_t-B^H_{t-\tau}|^k\right]=\frac{2^{k/2}\Gamma(\frac{k+1}{2})}{\Gamma(\frac{1}{2})}\eta^k \tau^{kH}, \qquad\tau, k \geq0,
\end{equation}
where $\Gamma(.)$ denotes the Gamma function, see \cite{Kolmogorov,MvN}.


Turning things around, we  define the absolute empirical moment of order $k$ of the increments of a process $X$ (playing the role of the log-volatility process), in a time interval $[0,N]$ for a given scale $\tau$:\footnote{ In reality, we are using a version of $M_{k,\tau,N}(X)$ with overlapping increments. This allows us to slightly increase the convergence of this empirical absolute moment~\cite{LM}.}
\begin{equation}\label{kmoments}
M_{k,\tau,N}(X) = \frac{1}{\lfloor N/\tau\rfloor}\sum_{i=1}^{\lfloor N/\tau\rfloor}{|X_{i\tau}-X_{(i-1)\tau}|^k}.
\end{equation}
Using equation~\eqref{eq:absMom}, it follows that $\ln(M_{k,\tau,N}(X))$ is proportional to $H$ if $X$ is a fBm as  increments are stationary. This is the basis for estimators of Hurst exponents, see e.g. \cite{BCI,Garcin2017}. In particular, we can compute such empirical absolute moments for a great number of scales, and the estimator of $H$ is then $1/k$ times the slope of the regression of $\ln(M_{k,\tau,N}(X))$ on $\ln(\tau)$~\cite{Coeur2005}.
As a consequence, when plotting $\ln(M_{k,\tau,N}(X))$ as a function of $\ln(\tau)$ (also called log-log plot), we should get a straight line if $X$ is a fBm.

When restricting the analysis to small scales (say up to 2.5 months), one obtains such a straight line for log-log plot of realized volatilities, as first underlined by~\cite{GJR}. We reproduce this empirical observation in Figure~\ref{fig:NasSP}, with two stock indices and data from the Oxford-Man Institute's Realized Library.

\begin{figure}[htbp]
	\centering
		\includegraphics[width=0.55\textwidth]{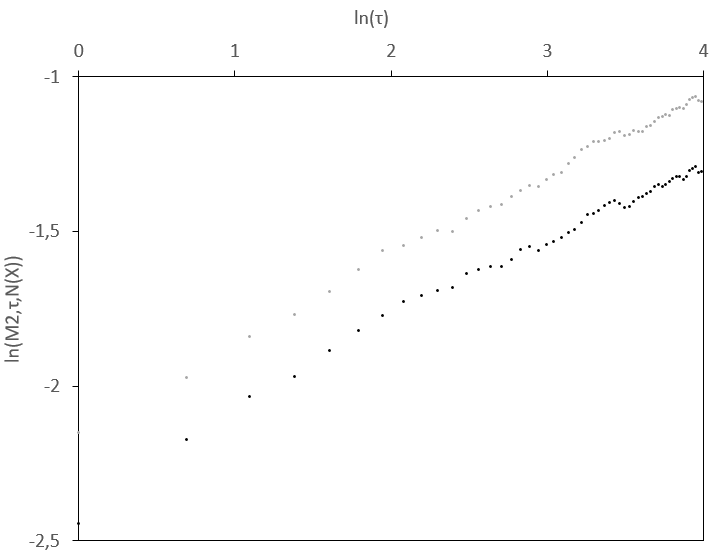}
\begin{minipage}{1.05\textwidth}\caption{Log-log plot, $\log (\tau)\mapsto\log (M_{2,\tau,1}(X))$, where $X$ is the log of realized volatility of two stock indices, NASDAQ index (black) and S\&P index (grey), between March 2000 and November 2018 sampled at a five-minute step. Time scales range between 1 day till 2.5 months.}
	\label{fig:NasSP}
\end{minipage}
\end{figure}

Note that the estimate $H\approx 0.1$ is in contrast with past literature on fractional volatility like \cite{CR}, where the long memory was associated to $H>0.5$. In particular, $H<0.5$ means that the volatility
path is rougher than semimartingales, and is consistent to a power law for the term structure of implied
volatility skew empirically observed in option markets, see \cite{GJR} and references therein.

\subsection{Impact of the mean-reversion at large scales}\label{mean_reversion}

An empirical stylized fact for the volatility is that it should be a stationary process, for both mathematical tractability and financial interpretability mostly at large times. In order to be consistent with a stationarity assumption, the rough volatility model in \cite{GJR} extends equation~\eqref{volaH} to the case where the log-volatility follows a fractional Ornstein-Uhlenbeck process (fOU, see \cite{CKM}) with a  very long reversion time scale, so that the effect of this mean reversion is invisible at the scales of the study (time scales considered are between one day and two months).
However, when dealing with scales longer than two months, we should expect a deviation from the linear behaviour in the log-log plot for  stationarized fBm processes (for example a fOU  or the inverse Lamperti transform of a fBm, see e.g. \cite{CKM,GarcinLamperti,SGKMBP}).
We illustrate the phenomenon in the simple  case where $k=2$, namely the second absolute moment for the fBm, that can be rewritten as follows: 

\begin{align*}
M_{2,\tau}(X)&= \E\left[|X_t-X_{t-\tau}|^2\right]\\
&=\E\left[|X_t|^2\right]+\E\left[|X_{t-\tau}|^2\right]-2Cov\left(X_t,X_{t-\tau}\right)
\end{align*}

Now, for large scales, the first two terms become similar and
independent of $\tau$. If the third term, the covariance, is a decreasing function wrt $\tau$ (decreasing to $0$), then $M_{2,\tau}(X)$ becomes an increasing function that flattens for large $\tau$. 
As a consequence, the function $\log (\tau)\rightarrow \log (M_{2,\tau}(X))$ behaves as a straight line for small values of $\tau$, but as the time scale increases, the slope decreases gradually until reaching zero and we observe a concave behaviour for large scales.
In conclusion, if we include the stationary volatility stylized fact into account, we should observe a potential decrease in the slope of the straight line in the log-log plot, or equivalently a smaller Hurst exponent for larger scales.

\subsection{Spurious roughness and noisy volatility}\label{sec:spurious}

It is well known that it is  possible to simulate  a spurious effect of roughness in the volatility paths just by playing with the drift of the volatility process, in particular by introducing a strong mean reversion effect, see for example Figure 2 in \cite{FTW} or \cite{Rogers19}. There is indeed another  easy way to generate spurious roughness, namely by adding noise into the observations, as also pointed out by \cite{FTW}. In fact, what we measure is not volatility but just noisy volatility.
In this subsection, we first provide a motivational example where we show that the presence of an additive noise may lead to a spurious roughness effect in a simulation study, where parameters are in line with empirical findings of the historical series that we shall consider in our empirical investigation. We then show how to quantify the impact of the noises in the estimation of the Hurst exponent in order to filter out the bias. 


We assume that the log-price $\log (S_t)$ follows an Ito process with stochastic variance  $\sigma^2_t$. Let $\widehat{\sigma}_t$ be the estimated volatility of day $t$, obtained using for example the realized volatility. These estimators provide us with a noisy version of the true and unobserved volatility, $\sigma_t$. We assume an additive model of measurement noise\footnote{Note that here we assume an additive model for the variance, not for the volatility. We can justify this choice by the additive nature of the variance, which makes it possible to apply the central limit theorem and to get an asymptotic distribution of the measurement error \cite{Rootzen80, JacodProtter98, BNS}.}: 
\begin{equation}\label{eq:VolBruit}
\widehat{\sigma}_t^2=\sigma_t^2+\varepsilon_t,
\end{equation}
where $t\in\{1,...,N\}$ and where the $\varepsilon_{1},...,\varepsilon_{N}$ are i.i.d. centered random variables.

Moreover, we consider the simple case where the variance of the log-prices follows a geometric Brownian motion:
\begin{equation}\label{eq:gBm}
\sigma_t^2=\sigma_0^2\exp\left(\beta B_t-\frac{1}{2}\beta^2t\right).
\end{equation}
Endowed with the estimation of the parameters $\sigma_0$ and $\beta$ for \eqref{eq:gBm}, we  simulate the model \eqref{eq:VolBruit} for different values of the standard deviation of the noise and it turns out that for certain values of the noise variance, a spurious roughness effect appears naturally, in line with \cite{FTW}: Figure \ref{noisyvol} shows a typical spurious rough simulated path for a  volatility satisfying \eqref{eq:gBm}.
\begin{figure}[htbp]
	\centering
\includegraphics[width=0.7\textwidth]{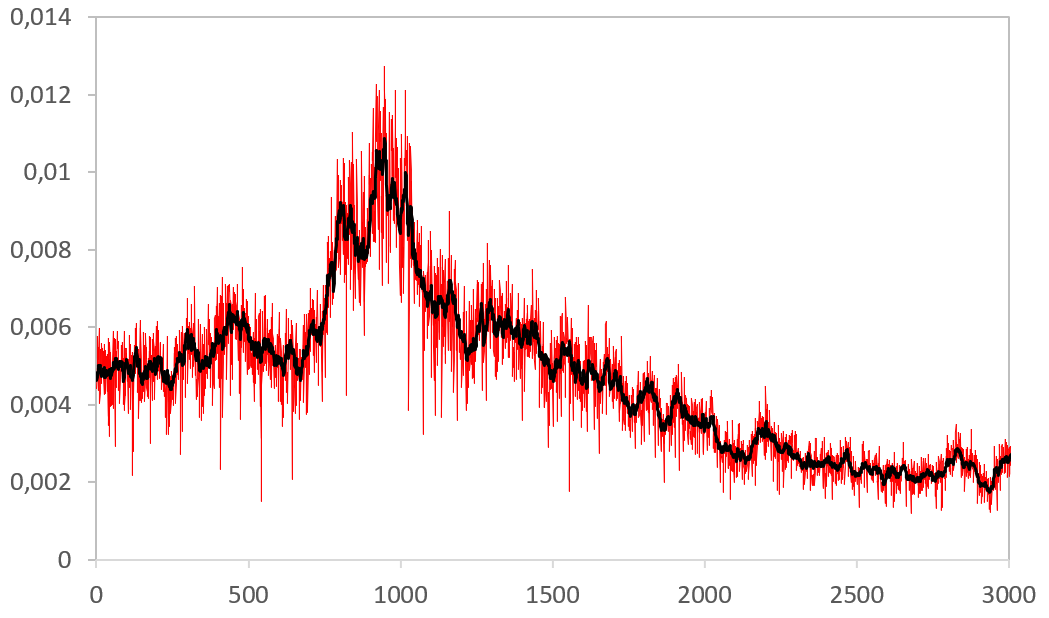}
\begin{minipage}{0.7\textwidth}\caption{Simulation of the noisy volatility  model~\eqref{eq:VolBruit}, in which the standard deviation of the noise is $25\%$ times the variance $\sigma^2_t$ of the log-price (the statistical framework exposed in Section~\ref{sec:TheoNoiseVolProx} justifies this parametric dependence to $\sigma^2_t$). The variance process of the price follows a geometric Brownian motion with the parameters estimated as for EUR/USD, i.e. $\sigma_0=4.62\times 10^{-3}$ and $\beta=3.8\%$.}
	\label{noisyvol}
\end{minipage}
\end{figure}

%
%
%
This result strongly depends on the input volatility of the volatility (and on the model too, which is here limited to the geometric Brownian motion for the variance process). In particular, we used the same vol of vol for all the simulations, which is the vol of vol relevant for EUR/USD. With higher vol of vol, the impact of the noise in the estimation of the Hurst exponent is decreased, as one can see in Figure~\ref{fig:SimulHvsVoV_2}. 
\begin{figure}[htbp]
	\centering
		\includegraphics[width=0.6\textwidth]{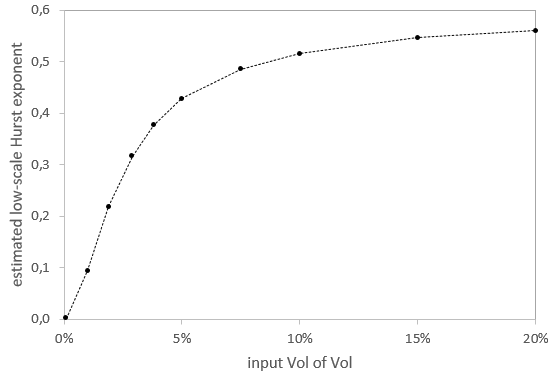}
\begin{minipage}{0.7\textwidth}\caption{Estimated Hurst exponent at small scales for various volatilities of the volatility, obtained by simulation of the prices and estimation of the realized volatility. The variance process of the price follows a geometric Brownian motion with the volatility parameter estimated as for EUR/USD, i.e. $\sigma_0=4.62\times 10^{-3}$, and $\beta$ ranging from $0.1\%$ till $20\%$. }
	\label{fig:SimulHvsVoV_2}
\end{minipage}
\end{figure}

In conclusion, a spurious roughness effect may appear in the presence of  a noise. It will be therefore  important to measure the standard deviation  of the noise in order to filter out the bias from the estimation of the Hurst exponent.

\subsection{Theoretical noise of the volatility proxy}\label{sec:TheoNoiseVolProx}

In this subsection we focus on the first source of bias in the estimates, namely the measurement error, coming from the  fact that the volatility process is not  observable. 
%

We can calculate the realized variance at a on-day scale for a given day, using $n$ log-returns:
$$\widehat{\sigma}^2_t(n)=\sum_{i=0}^{n-1}{\left[\log\left(\frac{S_{t-i/n}}{S_{t-(i+1)/n}}\right)\right]^2}.$$
The realized variance is an approximation of the integrated variance:
$$\bar{\sigma}^2_t=\int_{t-1}^{t}{\sigma^2_u du}.$$
If the log-price follows an Ito process, from e.g. \cite{Rootzen80, JacodProtter98, BNS}, we know the asymptotic distribution of the difference between the realized variance and the integrated variance, conditionally to the volatility process $(\sigma_u)_{u\in[t-1,t]}$:
$$\sqrt{n}\left(\widehat{\sigma}^2_t(n)-\bar{\sigma}^2_t\right)\overset{n\rightarrow\infty}{\longrightarrow} \mathcal N \left(0,2\int_{t-1}^t{\sigma^4_u du}\right).$$
That is, asymptotically,  $\widehat{\sigma}^2_t(n)-\bar{\sigma}^2_t$ is a centered Gaussian variable of variance $\frac{2}{n}\int_{t-1}^t{\sigma^4_u du}$. By neglecting the variations of $\sigma_t$ in a day, we approximate the variance of the $\varepsilon_t$ in equation~\eqref{eq:VolBruit} by $2\sigma_t^4/n$. For FX rates, sampled every minute, we have $n=1440$. 

\subsection{A second type of noise: the smoothing effect}

Motivated by the inaccuracy of the Hurst estimation, we now focus on the noise that comes from the fact that we try to estimate the Hurst exponent of the volatility process by applying estimation methods for the integrated volatility. Therefore, when we compute the variance of an increment of the realized volatility in the estimation of the Hurst exponent, for a given scale $\tau$, we are in fact computing the integrated variance of the increment of the volatility process for a scale varying between $\tau$ minus one day and $\tau$ plus one day.
This is referred to as to the smoothing error, namely the bias introduced by taking the integral instead of the spot (unobservable) argument, see also Appendix 3 in \cite{GJR}, where  methods and equations are similar to what we're going to show in this subsection.
In particular, in what follows, we show that this smoothing effect tends to overestimate the Hurst exponent. It may also explain why, in the Nasdaq log-log plots obtained by \cite{GJR}, the slope of straight line  is slightly higher for the lowest log-scales, what is illustrated in Figure~\ref{fig:Bruit_IV_logplot}.

\begin{figure}[htbp]
	\centering
		\includegraphics[width=0.7\textwidth]{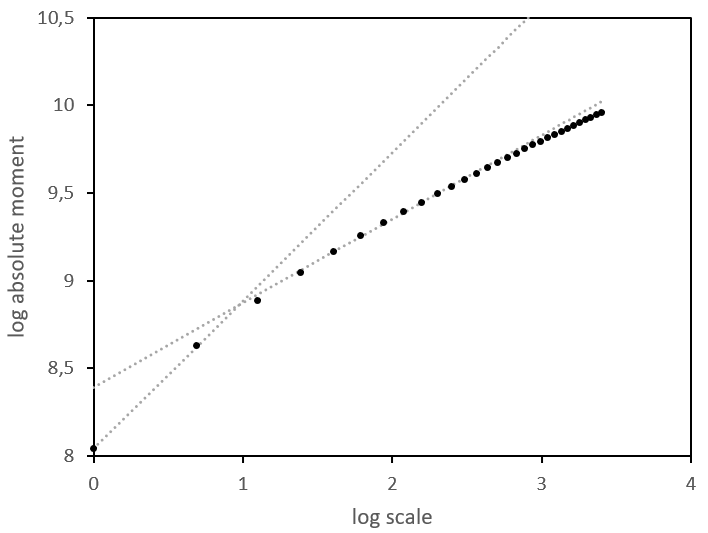}
\begin{minipage}{0.7\textwidth}\caption{Theoretical log second moment of the increments of $\widehat{\sigma}^2_{t,N,d}$, with $d=1$ and $N=100$, for an input Hurst exponent equal to 0.15. The grey dotted lines are the tangents for scales 1 and 2 (log-scales 0 and 0.7), and scales 5 and 10 (log-scales 1.6 and 2.3), leading respectively to estimated Hurst exponents of 0.42 and 0.24.}
	\label{fig:Bruit_IV_logplot}
\end{minipage}
\end{figure}

\begin{prop}\label{pro:smoothing}
Let the variance process $\sigma_t^2$ follow a fBm of Hurst exponent $H$ and variance $\xi^2$. Let the realized variance for $d\geq 1$ days and $N\geq 1$ observations per day be defined by: 
$$\widehat\sigma^2_{t,N,d}=\frac{d}{N}\sum_{i=0}^{N-1}{\sigma_{t-id/N}^2}.$$
Then, the variance of an increment of duration $\tau d$ of $\widehat\sigma^2_{t,N,d}$ is: 
$$\E\left[(\widehat\sigma^2_{t,N,d}-\widehat\sigma^2_{t-\tau d,N,d})^2\right] = \xi^2\frac{d^2}{N^2}\sum_{i=0}^{N-1}{\sum_{j=0}^{N-1}{\left(\left|\tau d + \frac{(j-i)d}{N}\right|^{2H}-\left|\frac{(j-i)d}{N}\right|^{2H}\right)}}.$$
For $d=1$, we also have the asymptotic expression:
\begin{equation}\label{eq:smoothingError}
\underset{N\rightarrow+\infty}{\lim} \E\left[(\widehat\sigma^2_{t,N,1}-\widehat\sigma^2_{t-\tau ,N,1})^2\right] = \xi^2\tau^{2H}f(\tau,H),
\end{equation}
where
\begin{equation}\label{eq:f_smoothingError}f(\tau,H)=\frac{\tau^{2}}{(2H+1)(2H+2)}\left(\left(1+\frac{1}{\tau}\right)^{2H+2}-2-2\left(\frac{1}{\tau}\right)^{2H+2}+\left(1-\frac{1}{\tau}\right)^{2H+2}\right).
\end{equation}
\end{prop}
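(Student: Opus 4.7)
My plan is to reduce both identities to the standard fBm increment-covariance formula
\[\E\bigl[(B^H_a - B^H_b)(B^H_c - B^H_e)\bigr] = \frac{\xi^2}{2}\bigl(|a-e|^{2H} + |b-c|^{2H} - |a-c|^{2H} - |b-e|^{2H}\bigr),\]
which follows immediately from the polarization identity $2xy = x^2 + y^2 - (x-y)^2$ applied to the usual fBm covariance $\E[B^H_s B^H_t] = (\xi^2/2)(|s|^{2H} + |t|^{2H} - |t-s|^{2H})$.

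For the finite-$N$ formula I would expand
\[\widehat\sigma^2_{t,N,d} - \widehat\sigma^2_{t-\tau d, N, d} = \frac{d}{N}\sum_{i=0}^{N-1}\bigl(\sigma^2_{t-id/N} - \sigma^2_{t-\tau d - id/N}\bigr),\]
square it, and take the expectation, turning the square into a double sum of increment covariances of the fBm $\sigma^2$. Plugging the identity above with $a=t-id/N$, $b=t-\tau d - id/N$, $c=t-jd/N$, $e=t-\tau d - jd/N$, the ``diagonal'' differences $a-c$ and $b-e$ both collapse to $(j-i)d/N$, while the two cross differences give $\tau d \pm (j-i)d/N$. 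Swapping $i\leftrightarrow j$ in the double sum shows that the $+$ and $-$ variants contribute equally, so the prefactor $\xi^2/2$ combined with this doubling produces exactly the claimed closed form.

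For $d=1$ and $N\to\infty$, the resulting expression is a Riemann sum for
\[\xi^2 \int_0^1\!\!\int_0^1 \bigl(|\tau + v - u|^{2H} - |v-u|^{2H}\bigr)\,du\,dv.\]
Because the integrand depends only on $w=v-u$, and the law of $V-U$ for independent uniforms on $[0,1]$ has the triangular density $1-|w|$ on $[-1,1]$, this reduces to a single integral against that kernel. Assuming $\tau\geq 1$ (the parameter range in which $f(\tau,H)$ is well defined), I would split the integral on $[-1,0]$ and $[0,1]$, substitute $u=\tau+w$ in the $(\tau+w)^{2H}$ piece, and integrate $u^{2H}$ and $u^{2H+1}$ between the consecutive integer shifts of $\tau$. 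Only four quantities then appear: $\tau^{2H+2}$, $(\tau\pm 1)^{2H+2}$, and the purely numerical term $\int_{-1}^1 |w|^{2H}(1-|w|)\,dw = 2/\bigl((2H+1)(2H+2)\bigr)$. Collecting them and factoring $\tau^{2H+2}$ yields precisely \eqref{eq:f_smoothingError}.

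The main obstacle is purely bookkeeping in this last step: one must verify that the $1/(2H+1)$ contributions from integrating $u^{2H}$ cancel neatly against the $1/(2H+2)$ contributions from integrating $u^{2H+1}$, leaving the single prefactor $1/\bigl((2H+1)(2H+2)\bigr)$, and that the stray $(1\mp\tau)$ factors produced by the substitution combine with $(\tau\mp 1)^{2H+1}$ into the clean $(\tau\pm 1)^{2H+2}$ terms. No analytical subtlety arises provided $\tau\geq 1$; the range $\tau<1$ would require tracking signs inside $(\tau+w)^{2H}$ and falls outside the domain on which $f(\tau,H)$ is even real.
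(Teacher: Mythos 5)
Your proposal is correct and follows essentially the same route as the paper: both rest on the fBm covariance $\E[\sigma^2_s\sigma^2_t]=\frac{\xi^2}{2}(|s|^{2H}+|t|^{2H}-|s-t|^{2H})$ applied to the double sum (your polarization/increment-covariance bookkeeping with the $i\leftrightarrow j$ swap is just a reorganization of the paper's term-by-term cancellation of the $|t-\cdot|^{2H}$ terms), and both pass to the same Riemann integral $\xi^2\int_0^1\int_0^1\left(|\tau+v-u|^{2H}-|v-u|^{2H}\right)du\,dv$ for $d=1$. The only difference is that you additionally sketch the explicit evaluation of that integral via the triangular kernel, yielding $\frac{(\tau+1)^{2H+2}+(\tau-1)^{2H+2}-2\tau^{2H+2}-2}{(2H+1)(2H+2)}=\tau^{2H}f(\tau,H)$ for $\tau\geq 1$, a step the paper states without detail; your computation is consistent with the stated $f(\tau,H)$.
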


\begin{proof}
We write first the straightforward decomposition:
$$\begin{array}{cl}
 & \E\left[(\widehat\sigma^2_{t,N,d}-\widehat\sigma^2_{t-\tau d,N,d})^2\right] \\
 = & \E\left[\widehat\sigma_{t,N,d}^4\right] + \E\left[\widehat\sigma_{t-\tau d,N,d}^4\right] - 2\E\left[\widehat\sigma^2_{t,N,d}\widehat\sigma^2_{t-\tau d,N,d}\right] \\
 = & \frac{d^2}{N^2}\sum_{i,j=0}^{N-1}{\left(\E\left[\sigma^2_{t-i\frac{d}{N}}\sigma^2_{t-j\frac{d}{N}}\right] + \E\left[\sigma^2_{t-\tau d-i\frac{d}{N}}\sigma^2_{t-\tau d-j\frac{d}{N}}\right] - 2\E\left[\sigma^2_{t-i\frac{d}{N}}\sigma^2_{t-\tau d-j\frac{d}{N}}\right]\right)}.
 \end{array}$$
Then, using the fact that $\E\left[\sigma^2_{t_i}\sigma^2_{t_j}\right]=\frac{\xi^2}{2}\left(|t_i|^{2H}+|t_j|^{2H}-|t_i-t_j|^{2H}\right)$, according to the fBm assumption, we get:
$$\begin{array}{ccl}
 \E\left[(\widehat\sigma^2_{t,N,d}-\widehat\sigma^2_{t-\tau d,N,d})^2\right] & = & \xi^2\frac{d^2}{2N^2}\sum_{i,j=0}^{N-1}{\left(\left[\left|t-i\frac{d}{N}\right|^{2H}+\left|t-j\frac{d}{N}\right|^{2H}-\left|\frac{(j-i)d}{N}\right|^{2H}\right] \right.} \\
 & & + \left[\left|t-\tau d-i\frac{d}{N}\right|^{2H}+\left|t-\tau d-j\frac{d}{N}\right|^{2H}-\left|\frac{(j-i)d}{N}\right|^{2H}\right] \\
 & & \left.- 2 \left[\left|t-i\frac{d}{N}\right|^{2H}+\left|t-\tau d-j\frac{d}{N}\right|^{2H}-\left|\tau d+\frac{(j-i)d}{N}\right|^{2H}\right]\right).
\end{array}$$
Noting that in the above equation $\sum_{i,j=0}^{N-1}{\left|t-\tau d-i\frac{d}{N}\right|^{2H}}=\sum_{i,j=0}^{N-1}{\left|t-\tau d-j\frac{d}{N}\right|^{2H}}$ and that $\sum_{i,j=0}^{N-1}{\left|t-i\frac{d}{N}\right|^{2H}}=\sum_{i,j=0}^{N-1}{\left|t-j\frac{d}{N}\right|^{2H}}$, we finally obtain:
$$\E\left[(\widehat\sigma^2_{t,N,d}-\widehat\sigma^2_{t-\tau d,N,d})^2\right] = \xi^2\frac{d^2}{N^2}\sum_{i=0}^{N-1}{\sum_{j=0}^{N-1}{\left(\left|\tau d + \frac{(j-i)d}{N}\right|^{2H}-\left|\frac{(j-i)d}{N}\right|^{2H}\right)}}.$$
From this result, we conclude for the asymptotic case:
$$\begin{array}{ccl}
\underset{N\rightarrow+\infty}{\lim} \E\left[(\widehat\sigma^2_{t,N,1}-\widehat\sigma^2_{t-\tau ,N,1})^2\right] & = & \xi^2\int_0^1{\int_0^1{\left(\left|\tau +v-u\right|^{2H}-\left|v-u\right|^{2H}\right)du}dv} \\
 & = & \xi^2\tau^{2H}f(\tau,H).
\end{array}$$
\end{proof}

The asymptotic version of the realized variance $\widehat\sigma_.$ is the integrated variance $\bar\sigma_.$. In Proposition~\ref{pro:smoothing}, we get an expression for the variance of its increments which is consistent with the one provided by~\cite{GJR} and which is more concise than the equivalent expression obtained for the realized variance.

As one can see in Figure~\ref{fig:Bruit_IV_Hurst}, though the bias is limited for higher values of $H$, it is very significant for lower Hurst exponents.

\begin{figure}[htbp]
	\centering
		\includegraphics[width=0.6\textwidth]{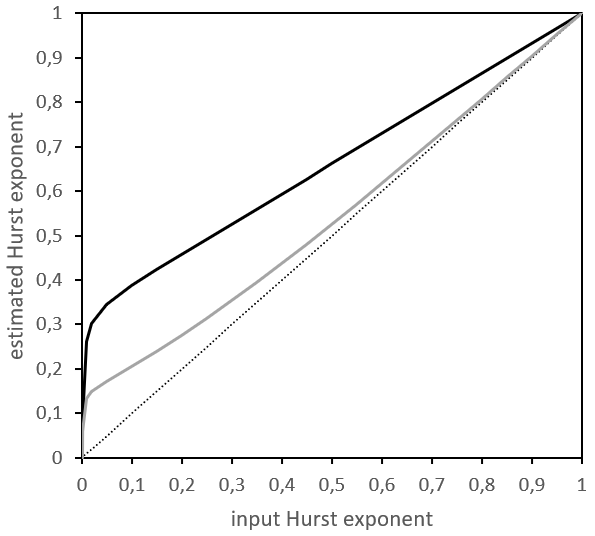}
\begin{minipage}{0.7\textwidth}\caption{Theoretically estimated Hurst exponent for $\widehat{\sigma}^2_{t,N,d}$, with $d=1$ and $N=100$, for various values of input Hurst exponents. The black line is the estimation on scales 1 and 2 (log-scales 0 and 0.7), the grey one on scales 5 and 10 (log-scales 1.6 and 2.3). The dotted line is the identity.}
	\label{fig:Bruit_IV_Hurst}
\end{minipage}
\end{figure}


\begin{Remark}
There is also a third type of error, namely the microstructure noise, coming from decimalization, absence of distinction between bid and ask, price formation and other issues typicically faced in intraday data (see e.g. \cite{RR1, RR2} and references therein). This microstructure noise affects the price at each observation in a roughly similar manner as the measurement noise already analysed. It tends to overestimate the variance process. In order to limit the impact of the microstructure noise, one can consider higher time steps for estimating the integrated variance. In line with the empirical literature, a 5-minute or even a 1-minute frequency turns out to limit the impact of the microstructure noise in the FX market \citep{BKZ,LA,KM}, so we will neglect this error in our empirical analysis. 

\end{Remark}

\section{Filtering the noises}\label{sec:noise}

In this section, we are only working with moments of order 2, with an increment duration $\tau\in\mathbb N$ such that $1\leq\tau\ll N$:
\begin{equation}\label{2moments}
M_{2,\tau}(X)=\frac{1}{(N-\tau+1)}\sum_{i=\tau}^{N}{\vert X_{i}- X_{i-\tau}\vert^2}.
\end{equation}

We want to filter the measurement noise and the smoothing error. More precisely, for the measurement noise we simply correct the bias in the estimation of all the absolute moments in the log-log plot for the variance, and for the smoothing error we correct the multiplicative error in the moment appearing in equation~\eqref{eq:smoothingError}. To put it simply, we are looking for the moments of increments of $\sigma^2_t$, but as we cannot observe $\sigma^2_t$ we work instead with $\bar{\sigma}^2_t$, thus introducing the smoothing error. In addition, $\bar{\sigma}^2_t$ is approximated by $\widehat{\sigma}^2_t$, whose difference with $\bar{\sigma}^2_t$ is the measurement noise.

The way the noise is to be filtered strongly depends on the model one assumes for the volatility dynamic. In the next two subsections, we present two filtering methods based on two competing models. The first one is consistent with the RFSV approach. The second model uses the same underlying dynamic, a fBm, but applies it to the variance process instead of the volatility process. This approach is more consistent with dynamics inspired by the ARCH process, which are traditionally invoked in econometrics. 

\subsection{The log-volatility as a fBm}

The first model we consider if the RFSV model~\cite{GJR}: $\sigma_t=\sigma\exp(\xi B^H_t)$, where $B^H_t$ is a fBm of Hurst exponent $H$ and $\sigma,\xi>0$. With this assumption, the theoretical log-log plot may differ from a straight line because of measurement noise and smoothing error. We present a method to filter these noises from the log-log plot. If the RFSV model depicted all the scaling features of the spot volatility series, the filtered plot should result in a straight line.


The log-log plot of the log-volatility should be based on $M_{2,\tau}(\log(\sigma))$. However, we only observe $M_{2,\tau}(\log(\widehat{\sigma}))$. Thanks to the successive approximations explained below, we get the following relation between $M_{2,\tau}(\log(\sigma))$ and $M_{2,\tau}(\log(\widehat{\sigma}))$:
\begin{equation}\label{eq:filtre}
\begin{array}{ccl}
M_{2,\tau}(\log(\sigma_.)) & = & \frac{1}{4} M_{2,\tau}(\log(\sigma^2_.)) \\
 & \approx & \frac{1}{4\sigma_0^4}M_{2,\tau}(\sigma^2_.) \\
 & \approx & \frac{1}{4\sigma_0^4}f(\tau,H)^{-1}M_{2,\tau}(\bar{\sigma}^2_.) \\
 & \approx & \frac{1}{4}f(\tau,H)^{-1}M_{2,\tau}(\log(\bar{\sigma}^2_.))\\
 & \approx & \frac{1}{4}f(\tau,H)^{-1}\left[M_{2,\tau}(\log(\widehat{\sigma}^2_.))-\frac{4}{n}\right] \\
 & = & f(\tau,H)^{-1}\left[M_{2,\tau}(\log(\widehat{\sigma}_.))-\frac{1}{n}\right] ,
\end{array}
\end{equation}
where $\sigma_0\in\mathbb R$ and $f$ is defined in equation~\eqref{eq:f_smoothingError}. 
\begin{itemize}
\item The approximation between the first and the second line is based on a first-order Taylor expansion of the logarithm, around an arbitrary value $\sigma_0^2$: 
$$\begin{array}{ccl}
M_{2,\tau}(\log(\sigma^2_.)) & = & \E[(\log(\sigma^2_{.+\tau})-\log(\sigma^2_{.}))^2] \\
 & \approx & \E\left[\left(\log(\sigma_0^2)+\frac{\sigma^2_{.+\tau}-\sigma_0^2}{\sigma_0^2}-\log(\sigma_0^2)-\frac{\sigma^2_{.}-\sigma_0^2}{\sigma_0^2}\right)^2\right] \\
 & = & \frac{1}{\sigma_0^4}\E\left[\left(\sigma^2_{.+\tau}-\sigma^2_{.}\right)^2\right] \\
 & = & \frac{1}{\sigma_0^4}M_{2,\tau}(\sigma^2_.). 
\end{array}$$
In addition to the error left by the Taylor expansion, another source of error may appear in the equations above and in all this section regarding the difference between the empirical moments, $M_{2,\tau}$, and the theoretical ones. However, for long time series as ours, the difference is small with respect to other approximations and is equal to zero in average.
\item The approximation between the second and the third line is based on equation~\eqref{eq:smoothingError}, that is $M_{2,\tau}(\bar{\sigma}^2_.)=M_{2,\tau}(\sigma^2_.)f(\tau,H)$, and thus filters the smoothing error, assuming that $\sigma^2$ follows a fBm of Hurst exponent $H$. In fact, the assumption that $\log (\sigma)$ follows a fBm, which is the assumption in line with the  rough volatility model of \cite{GJR}, leads to the same approximation thanks to the first-order Taylor expansion of the logarithm introduced above, as $M_{2,\tau}(\log (\sigma_.))\approx \frac{1}{4\sigma_0^4}M_{2,\tau}(\sigma^2_.)$, which can be seen by following the same argument as in Proposition B.1 of \cite{FTW} who quantified the difference
between the log of integrated variance and the integral of log variance. 
\item The approximation between the third and the fourth line is based on the same kind of Taylor expansion than between the first and the second line.
\item The approximation between the fourth and the fifth line exploits the measurement noise between $\widehat{\sigma}^2$ and $\bar{\sigma}^2$. We have seen that if the log-price follows an Ito process, then $\widehat{\sigma}^2_t=\bar{\sigma}^2_t+\varepsilon_t$ where $\varepsilon_t$ is asymptotically a centered Gaussian variable of variance $\frac{2}{n}\int_{t-1}^{t}{\sigma_u^4du}\approx \frac{2}{n}\bar{\sigma}_t^4$. Thus, as an approximation, we consider that $\widehat{\sigma}^2_t\approx \bar{\sigma}^2_t(1+\alpha_t)$, with $\alpha_t\sim\mathcal N(0,2/n)$. According to Theorem H.1 of \cite{FTW}, the $\alpha_t$ are independent variables, also independent of the $\bar{\sigma}^2_t$, then:
\begin{equation}\label{eq:MomRVvsIV}
\begin{array}{ccl}
M_{2,\tau}(\log (\widehat{\sigma}^2_.)) & \approx & \E[(\log (\bar{\sigma}^2_{.+\tau})-\log (\bar{\sigma}^2_{.})+\log (1+\alpha_{.+\tau})-\log (1+\alpha_{.}))^2] \\
 & = & M_{2,\tau}(\log (\bar{\sigma}^2_.)) + \E[(\log (1+\alpha_{.+\tau})-\log (1+\alpha_{.}))^2] \\
 & = & M_{2,\tau}(\log (\bar{\sigma}^2_.)) + 2\E[(\log (1+\alpha_{.}))^2] - 2(\E[\log (1+\alpha_{.})])^2 \\
 & = & M_{2,\tau}(\log (\bar{\sigma}^2_.)) + 2\mathbb{V}ar (\log (1+\alpha_{.})) \\
 & \approx & M_{2,\tau}(\log (\bar{\sigma}^2_.)) + 2\mathbb{V}ar (\alpha_{.}) , 
\end{array}
\end{equation}
where we go from the first to the second line thanks to the fact that $\alpha_{.+\tau}$ and $\alpha_{.}$ are identically distributed, then to the third line because they are iid and to the last line by a Taylor expansion for the second moment of the logarithm of a random variable.
\end{itemize}

In equation~\eqref{eq:filtre}, we filter successively the measurement noise by translating each absolute moment by a value of $1/n$. Then, we filter the smoothing error by dividing the result by $f(\tau,H)$, with a properly chosen $H$, consistent with the rough framework.

We  show on a theoretical example, illustrated in Figure \ref{fig:FiltrePark}, that,  despite all the approximations cited above, the filtering method we propose is fairly accurate.

\begin{figure}[htbp]
	\centering
		\includegraphics[width=0.7\textwidth]{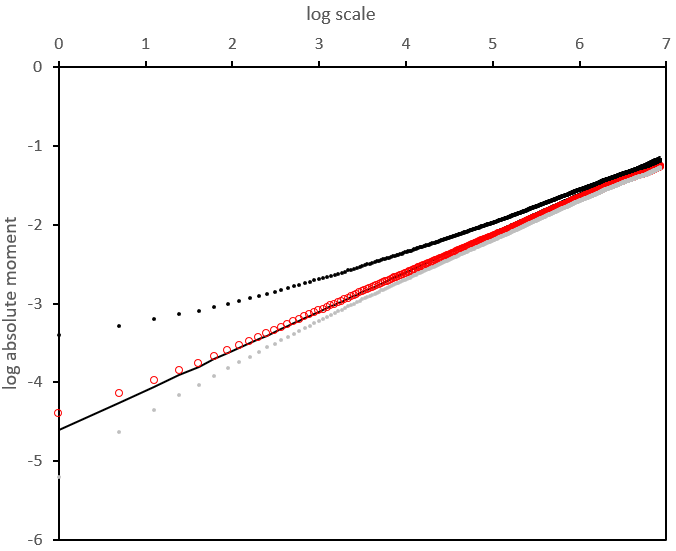}
\begin{minipage}{0.7\textwidth}\caption{Average of $\log (\tau)\mapsto\log (M_{2,\tau}(X))$, where $X$ is the log of the realized volatility (with $n=36$ observations each day, thus corresponding to a $40$-minute time step; as a robustness check, we have led the same analysis for other values of $n$ and have obtained similar results) of simulated prices with a volatility following an RFSV model, with $H=0.25$ and $\xi=10\%$. The black dots are the unfiltered observations and the straight line is the theoretical log-log plot for the RFSV model. The grey dots are the observations filtered by removing $1/n$ from each moment (filtering of the measurement noise), the red ones by filtering both the measurement noise and the smoothing error by the means of equation~\eqref{eq:filtre}.
 }
	\label{fig:FiltrePark}
\end{minipage}
\end{figure}

\subsection{The log-variance as a fBm}\label{logVariance}

If, instead of the RFSV model assumed in the subsection above, the log-variance follows a fBm, the approximations are then transformed into the following, when using the realized variance proxy:
$$\begin{array}{ccl}
M_{2,\tau}(\log (\sigma^2_.))  & \approx & \frac{1}{\sigma_0^4}M_{2,\tau}(\sigma^2_.) \\
 & \approx & \frac{1}{\sigma_0^4}f(\tau,H)^{-1}M_{2,\tau}(\bar{\sigma}^2_.) \\
 & \approx & f(\tau,H)^{-1}M_{2,\tau}(\log (\bar{\sigma}^2_.))\\
 & \approx & f(\tau,H)^{-1}\left[M_{2,\tau}(\log (\widehat{\sigma}^2_.))-\frac{4}{n}\right].
\end{array}$$

The main difference with the model in which the log-volatility is an fBm is a factor 4 in the filter of the measurement noise. 

\section{Empirical results for exchange rates}\label{sectionFX}

In this section we implement the empirical absolute moment regressions for volatilities of the most liquid exchange rates. We base our analysis on a real data set from Interactive Brokers. It gathers high-frequency rates between the 18th December 2006 and the 19th June 2019, sampled at a one-minute step, for a total of more than 4.6 millions data for each of the following pairs: EUR/USD, EUR/GBP, EUR/JPY, EUR/CAD, EUR/AUD, GBP/USD, GBP/JPY, USD/JPY, AUD/USD, and AUD/JPY. From these rates, we estimate daily volatilities, using the 1,440 observations in each day. We thus have for each series 3,206 consecutive observations of daily volatility that will be estimated by the realized volatility, defined as the square root of the average quadratic one-minute log-variation. We display the  results in Figure~\ref{fig:HurstFX} and Table~\ref{tab:HurstFX}.
\begin{figure}[htbp]
	\centering
		\includegraphics[width=0.38\textwidth]{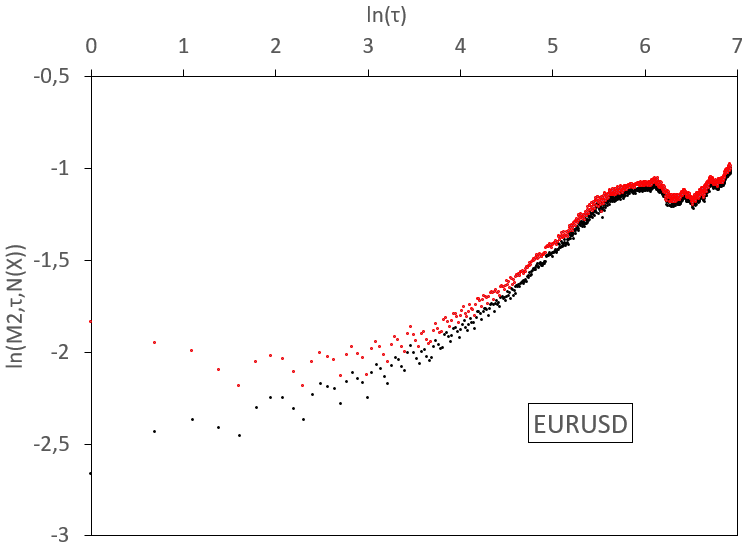} 
		\includegraphics[width=0.38\textwidth]{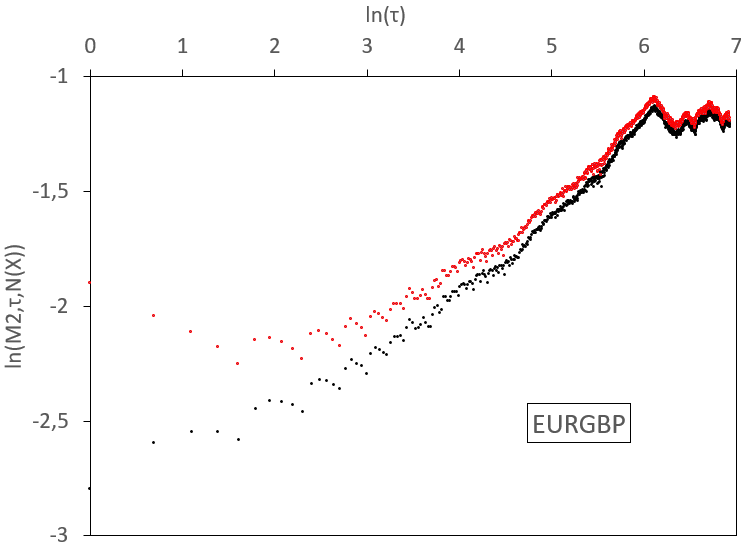}
		\includegraphics[width=0.38\textwidth]{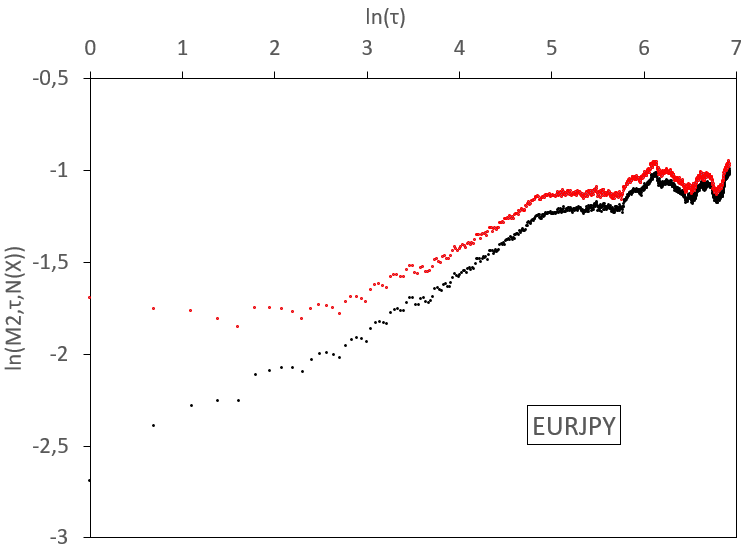}
		\includegraphics[width=0.38\textwidth]{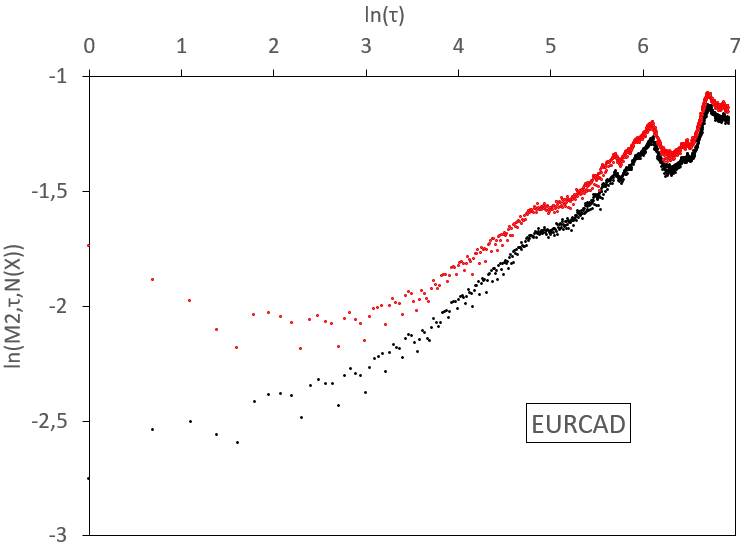}
		\includegraphics[width=0.38\textwidth]{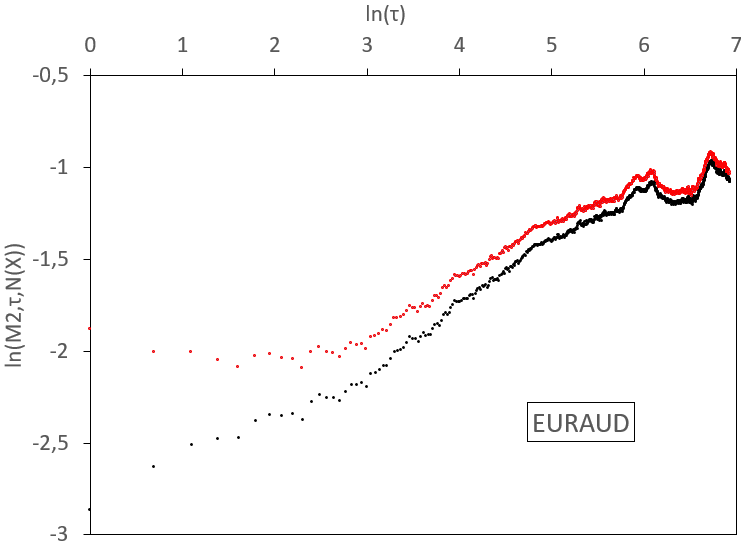}
		\includegraphics[width=0.38\textwidth]{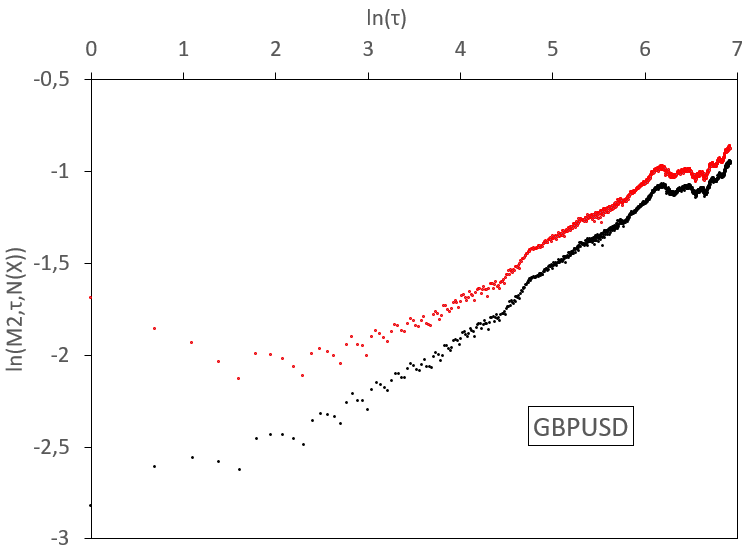}
		\includegraphics[width=0.38\textwidth]{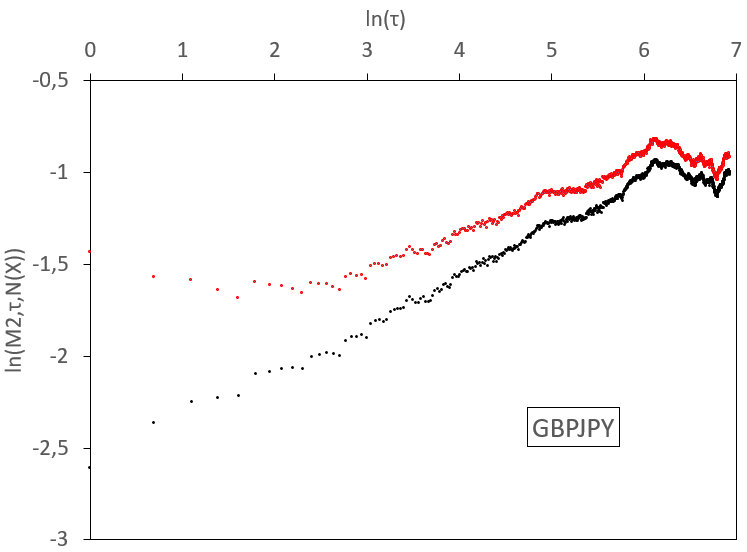}
		\includegraphics[width=0.38\textwidth]{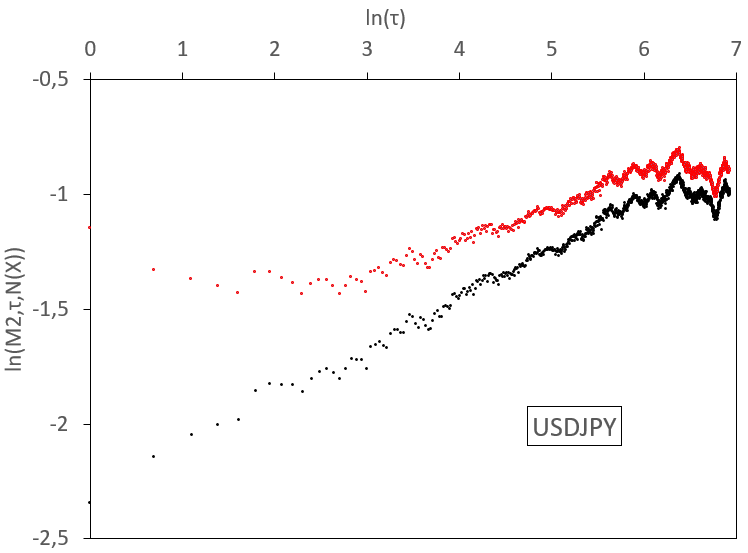}
		\includegraphics[width=0.38\textwidth]{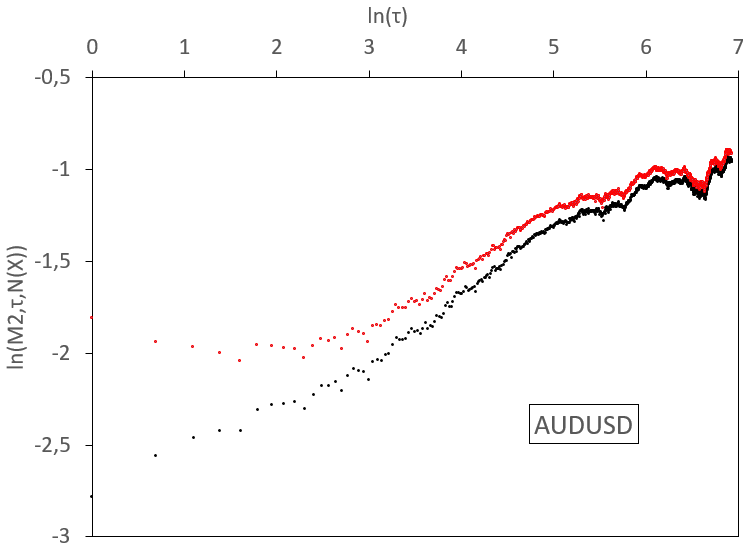}
		\includegraphics[width=0.38\textwidth]{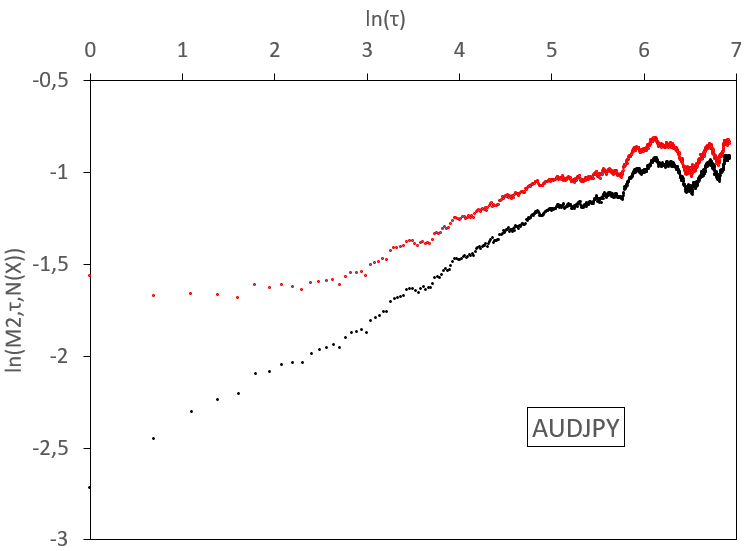}
\begin{minipage}{1.05\textwidth}\caption{Log-log plot (in black), $\log (\tau)\mapsto\log (M_{2,\tau,1}(X))$, where $X$ is the log of realized volatility of various FX rates between December 2006 and June 2019 sampled at a one-minute step, for a total of 4.6 millions data  for each of the following pairs: EUR/USD, EUR/GBP, EUR/JPY, EUR/CAD, EUR/AUD, GBP/USD, GBP/JPY, USD/JPY, AUD/USD, and AUD/JPY. Time scales range between 1 day till 3 years. The log-log plot in red is obtained by applying the denoising filter. }
	\label{fig:HurstFX}
\end{minipage}
\end{figure}

In Figure~\ref{fig:HurstFX}, we observe various regimes for the volatility process, depending on the time scale.  In general, for log-scales between 0 and 3 (that is to say between 1 day and 3 weeks), a linear regression of small slope holds for empirical absolute moments. Between 3 and a higher abscissa which depends on the rate considered (so between 3 weeks and a time horizon $\mathcal H$, which is located between 4.5 months and 12 months), the slope steepens. Above this long time horizon $\mathcal H$, the slope flattens and, at scales which depend again on the sample, aberrant oscillations appear, due to the limited size of the sample. Therefore, we observe a threefold behaviour of the dynamics: for small time scales (1 day till 3 weeks)  we get a small Hurst exponent, at higher scales (3 weeks till $\mathcal H$)  we have a higher Hurst exponent, and for very large scales (above $\mathcal H$)  we observe stationarity.

\begin{table}[htbp]
\centering
\begin{tabular}{l||c|c||c|c}
\hline
\hline
\hline
 \textbf{Scale} & \textbf{Small scales} &  \textbf{Large scales} \\
\hline
\hline
\textbf{EUR/USD} & 0.075 & 0.244 \\
\textbf{EUR/GBP}  & 0.086 & 0.217  \\
\textbf{EUR/JPY}  & 0.117 & 0.190 \\
\textbf{EUR/CAD}  & 0.068 & 0.184 \\
\textbf{EUR/AUD}  & 0.106 & 0.192  \\
\textbf{GBP/USD}  & 0.093 & 0.157  \\
\textbf{GBP/JPY}  & 0.112 & 0.148 \\
\textbf{USD/JPY}  & 0.098 & 0.144 \\
\textbf{AUD/USD}  & 0.108 & 0.196  \\
\textbf{AUD/JPY}  & 0.135 & 0.152 \\
\hline
\hline
\end{tabular}
\begin{minipage}{0.7\textwidth}\caption{Perceived Hurst exponent before denoising, estimated as half the slope of linear regressions for the plots presented in Figure~\ref{fig:HurstFX} restricted a time horizon lower than $\mathcal H$. Small scales refer to the range from 1 day till 3 weeks, large scales refer to a range depending on the sample considered: from 2 months till 4.5 months for EUR/JPY, EUR/CAD, EUR/AUD, GBP/JPY, and AUD/JPY, from 2 till 5 months for AUD/USD, from 3 till 8 months for EUR/USD, from 5 till 9 months for USD/JPY, and from 5 till 12 months for EUR/GBP and GBP/USD.}
\label{tab:HurstFX}
\end{minipage}
\end{table}

Now we turn our attention to the measurement noise and the smoothing error. We apply the methodology described in the previous section and we filter these noises from the observations. 
We display the results for the EUR/USD time series in Figure~\ref{fig:Filtre}, with the successive application of filters. 

The filtering of the measurement noise steepens the log-log plot (grey curve). With a 5-minute time step in the computation of realized variance, this effect is slight. Like for the initial curve, we observe convexity when the measurement noise is filtered: higher estimated Hurst exponent for low-frequency increments than for high-frequency increments of volatility. The second filter is about the smoothing error. It accentuates the convexity (red curve). We conclude that the filtered log-log plot still shows convexity. The fBm assumption, for which we should have the same slope of the log-log plot for all scales, is not enough to explain the observations.

\begin{figure}[htbp]
	\centering
		\includegraphics[width=0.7\textwidth]{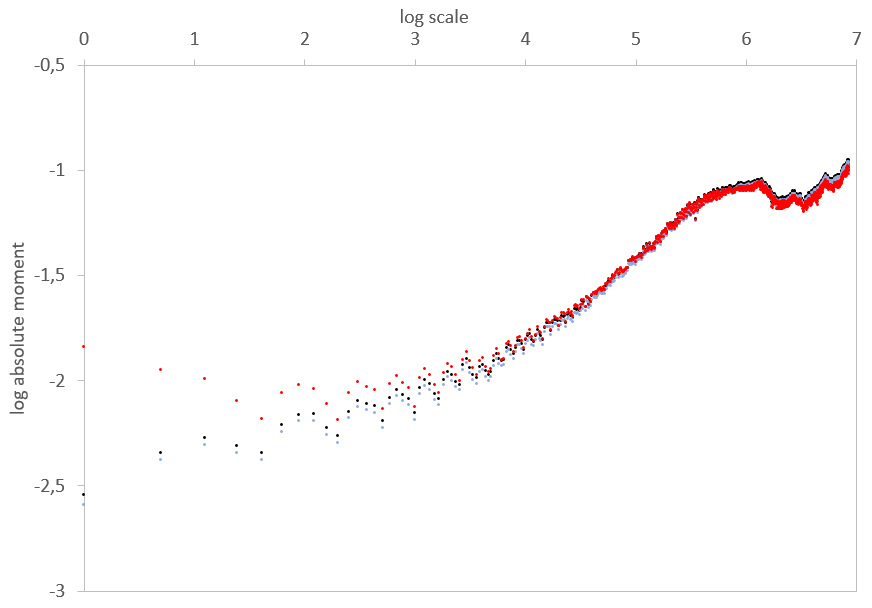}
\begin{minipage}{0.7\textwidth}\caption{Filtering noise for EUR/USD. 
Log-log plot for the the realized variance with 5-minute time step (in black). The measurement noise has been filtered by translating each absolute moment by a value of $1/n$ as in Equation \eqref{eq:filtre} (log-log plot in grey). Smoothing error has been additionally filtered by dividing the result by $f(\tau,H)$, with  $H$ chosen to be in line with the estimates (the results are indeed robust wrt the choice of $H$, consistent with the rough framework (log-log plot in red). Time scales range between 1 day till 3 years.
 }
	\label{fig:Filtre}
\end{minipage}
\end{figure}

We display similar results for all the studied FX rates in Figure~\ref{fig:HurstFX}. We consider a 1-minute time step in the computation of the realized variance. For FX rates, this amounts to 1440 observations each day and the corresponding measurement noise is almost invisible. We therefore do not decompose the noise filtering in this figure and simply display the raw (black) and fully filtered (red) curves. Once again, we observe convexity in the raw and filtered log-log plots.

These results deserve some comments, since the filtering relies on some approximations:
\begin{itemize}
\item Taylor expansions of the logarithm.
\item Identification of the empirical moments to the theoretical ones.
\item For the filtering of the smoothing error, we assume that the distortion of the log-log plot is the one consistent with a model in which the variance follows an fBm, rather than the rough volatility model that assumes instead that the log-volatility is an fBm. Nevertheless, the smoothing error tends to make the log-plot more concave because the increments of duration $\tau$ of the realized variance mitigates increments of the spot variance of duration $\tau$ with increments of shorter duration, for which the covariance is much stronger. However, we still observe that, even without this approximative filtering, the log-plot is convex in our empirical findings, see Figure \ref{fig:Filtre}. On top of that, we have conducted the same analysis as above on FX rates, by replacing proxies of volatility by proxies of variance. The filtered version is very similar, with still a clear convexity: the slope of the log-log plot increases with the time scale. We have not displayed the log-log plots of empirical variance processes in this paper since they are very close to the ones we obtained for the volatility. It supplements our findings of the previous subsection: neither the volatility nor the variance seems to follow a fBm, according to both raw and filtered log-log plots. 
\item The filtering function $f$ for the smoothing error depends on an $H$ parameter. But the value of the true Hurst exponent is unknown and should be the output of the filtering, not the input. In Figures~\ref{fig:HurstFX} and~\ref{fig:Filtre}, we have input an $H$ equal to the large-scale perceived Hurst exponent before denoising\footnote{ We thus take into account the fact that the $H$ perceived at large scales is less affected by noise than the $H$ at small scales.}, as displayed in Table~\ref{tab:HurstFX}, for instance $H=0.244$ for EUR/USD. But whatever the value of $H$, the filtered log-log plot (red) shows a stronger convexity than the grey curve, which is also convex. The lower the input $H$ in the function $f$, the stronger this convexity. If the true dynamic was an fBm of Hurst exponent $H$, the filtered log-log plot, with $H$ as input of $f$, should be a straight line of slope $2H$. Since no $H$ results in such a straight line, the fBm cannot describe on its own the dynamic of log-volatilities.

\end{itemize}

\begin{Remark}
When the time step used in the realized variance increases, the measurement noise increases as well. As a consequence,  the slope of the log-log plot decreases (because of the greater impact of the noise in the estimation of the Hurst exponent), as one can see in Figure~\ref{fig:NoiseStepRV_EURUSD}, but empirically the global shape of the log-log plot remains unchanged, still showing convexity. In other words, the choice of the number of time steps in the computation of the realized variance has empirically a limited impact on the estimation of the Hurst exponent in our dataset.
\begin{figure}[htbp]
	\centering
		\includegraphics[width=0.7\textwidth]{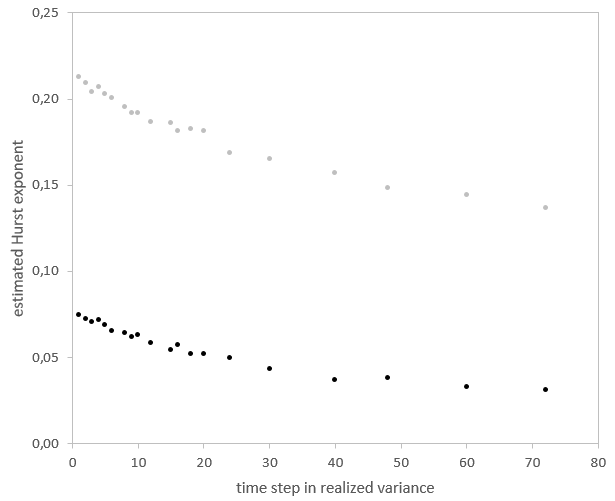}
\begin{minipage}{0.7\textwidth}\caption{Estimated Hurst exponent at small scales (lower than 21 days, in black) and high scales (between 54 and 149 days, in grey) for various time steps in the realized variance, obtained on EUR/USD.}
	\label{fig:NoiseStepRV_EURUSD}
\end{minipage}
\end{figure}
\end{Remark}

\section{Conclusion}\label{Conclusion}

In this paper we performed a extensive empirical investigation on a large dataset of exchange rates. Our findings are twofold: first, we confirmed that the estimation of the Hurst exponent is indeed below 0.5 for time scales close to the ones considered in past literature, e.g. in \cite{GJR}. However, when larger  time scales are considered (which is possible thanks to the size of our dataset), we face a violation of the stationarity assumption of the increments of the  fBm  driving the volatility process, in contrast with the rough volatility paradigm. In fact, in the log-log plot we observe a convexity effect that cannot be explained even by taking into account the different types of noises in the estimation procedures. As a consequence, we consider this convexity effect as a new stylized fact of the market that any  advanced volatility model should be able to reproduce.

\bibliographystyle{plainnat}
\bibliography{bibliorough}

\begin{thebibliography}{40}
\providecommand{\natexlab}[1]{#1}
\providecommand{\url}[1]{\texttt{#1}}
\expandafter\ifx\csname urlstyle\endcsname\relax
  \providecommand{\doi}[1]{doi: #1}\else
  \providecommand{\doi}{doi: \begingroup \urlstyle{rm}\Url}\fi

\bibitem[Al\`os et~al.(2007)Al\`os, Le\'on, and Vives]{ALV}
E.~Al\`os, J.A. Le\'on, and J.~Vives.
\newblock On the short-time behavior of the implied volatility for
  jump-diffusion models with stochastic volatility.
\newblock \emph{Finance and stochastics}, 11\penalty0 (4):\penalty0 571--589.,
  2007.

\bibitem[Andersen and Bollerslev(1997)]{AB}
T.G. Andersen and T.~Bollerslev.
\newblock Intraday periodicity and volatility persistence in financial markets.
\newblock \emph{Journal of empirical finance}, 4\penalty0 (2):\penalty0
  115--158, 1997.

\bibitem[Andersen et~al.(2003)Andersen, Bollerslev, Diebold, and Labys]{ABDL01}
T.G. Andersen, T.~Bollerslev, F.X. Diebold, and P.~Labys.
\newblock The distribution of realized exchange rate volatility.
\newblock \emph{Journal of the {A}merican Statistical Association},
  96:\penalty0 42--55, 2003.

\bibitem[Baillie(1996)]{Baillie}
R.T. Baillie.
\newblock Long memory processes and fractional integration in econometrics.
\newblock \emph{Journal of econometrics}, 73\penalty0 (1):\penalty0 5--59,
  1996.

\bibitem[Barndorff-Nielsen and Shephard(2001)]{BNS01}
O.E. Barndorff-Nielsen and N.~Shephard.
\newblock Non-{G}aussian {O}rnstein-{U}hlenbeck-based models and some of their
  uses in financial economics.
\newblock \emph{Journal of the Royal Statistical Society: Series {B}
  (Statistical Methodology)}, 63\penalty0 (2):\penalty0 167--241, 2001.

\bibitem[Barndorff-Nielsen and Shephard(2002)]{BNS}
O.E. Barndorff-Nielsen and N.~Shephard.
\newblock Econometric analysis of realized volatility and its use in estimating
  stochastic volatility models.
\newblock \emph{Journal of the Royal Statistical Society: Series {B}
  (Statistical Methodology)}, 64\penalty0 (2):\penalty0 253--280, 2002.

\bibitem[Benassi et~al.(1998)Benassi, Cohen, and Istas]{BCI}
A.~Benassi, S.~Cohen, and J.~Istas.
\newblock Identifying the multifractional function of a {G}aussian process.
\newblock \emph{Statistics and probability letters}, 39\penalty0 (4):\penalty0
  337--345, 1998.

\bibitem[Bennedsen et~al.(2016)Bennedsen, Lunde, and Pakkanen]{BLP}
M.~Bennedsen, A.~Lunde, and M.S. Pakkanen.
\newblock Decoupling the short- and long-term behavior of stochastic
  volatility.
\newblock \emph{arXiv preprint}, 1610.00332, 2016.

\bibitem[Bollerslev and Mikkelsen(1996)]{BM}
T.~Bollerslev and H.O. Mikkelsen.
\newblock Modeling and pricing long memory in stock market volatility.
\newblock \emph{Journal of econometrics}, 73\penalty0 (1):\penalty0 151--184,
  1996.

\bibitem[Breidt et~al.(1998)Breidt, Crato, and De~Lima]{BCD}
F.J. Breidt, N.~Crato, and P.~De~Lima.
\newblock The detection and estimation of long memory in stochastic volatility.
\newblock \emph{Journal of econometrics}, 83\penalty0 (1-2):\penalty0 325--348,
  1998.

\bibitem[Bub\'ak et~al.(2011)Bub\'ak, Ko\v{c}enda, and \v{Z}ike\v{s}]{BKZ}
V.~Bub\'ak, E.~Ko\v{c}enda, and F.~\v{Z}ike\v{s}.
\newblock Volatility transmission in emerging {E}uropean foreign exchange
  markets.
\newblock \emph{Journal of banking \& finance}, 35\penalty0 (11):\penalty0
  2829--2841, 2011.

\bibitem[Cheridito et~al.(2003)Cheridito, Kawaguchi, and Maejima]{CKM}
P.~Cheridito, H.~Kawaguchi, and M.~Maejima.
\newblock Fractional {O}rnstein-{U}hlenbeck processes.
\newblock \emph{Electronic journal of probability}, 8\penalty0 (3):\penalty0
  1--14, 2003.

\bibitem[Coeurjolly(2005)]{Coeur2005}
J.-F. Coeurjolly.
\newblock Identification of multifractional {B}rownian motion.
\newblock \emph{Bernoulli}, 11\penalty0 (6):\penalty0 987--1008, 2005.

\bibitem[Comte and Renault(1998)]{CR}
F.~Comte and E.~Renault.
\newblock Long memory in continuous-time stochastic volatility models.
\newblock \emph{Mathematical finance}, 8\penalty0 (4):\penalty0 291--323, 1998.

\bibitem[Cuchiero and Teichmann(2015)]{CT}
C.~Cuchiero and J.~Teichmann.
\newblock Fourier transform methods for pathwise covariance estimation in the
  presence of jumps.
\newblock \emph{Stochastic processes and their applications}, 125\penalty0
  (1):\penalty0 116--160, 2015.

\bibitem[Ding et~al.(1993)Ding, Granger, and Engle]{DGE}
Z.~Ding, C.W. Granger, and R.F. Engle.
\newblock A long memory property of stock market returns and a new model.
\newblock \emph{Journal of empirical finance}, 1\penalty0 (1):\penalty0
  83--106, 1993.

\bibitem[Duffie et~al.(2003)Duffie, Filipovic, and Schachermayer]{DFS}
D.~Duffie, D.~Filipovic, and W.~Schachermayer.
\newblock Affine processes and applications in finance.
\newblock \emph{The Annals of Applied Probability}, 13\penalty0 (3):\penalty0
  984--1053, 2003.

\bibitem[Dupire(1994)]{D}
B.~Dupire.
\newblock Pricing with a smile.
\newblock \emph{Risk}, 7:\penalty0 18--20, 1994.

\bibitem[Fukasawa et~al.(2019)Fukasawa, Takabatake, and Westphal]{FTW}
M.~Fukasawa, T.~Takabatake, and R.~Westphal.
\newblock Is volatility rough?
\newblock \emph{arXiv preprint}, 1905.04852, 2019.

\bibitem[Garcin(2017)]{Garcin2017}
M.~Garcin.
\newblock Estimation of time-dependent {H}urst exponents with variational
  smoothing and application to forecasting foreign exchange rates.
\newblock \emph{Physica {A}: statistical mechanics and its applications},
  483:\penalty0 462--479, 2017.

\bibitem[Garcin(2019)]{GarcinLamperti}
M.~Garcin.
\newblock Hurst exponents and delampertized fractional {B}rownian motions.
\newblock \emph{International journal of theoretical and applied finance},
  22\penalty0 (5):\penalty0 1950024, 2019.

\bibitem[Garcin and Goulet(2019)]{GG}
M.~Garcin and C.~Goulet.
\newblock Non-parametric news impact curve: a variational approach.
\newblock \emph{Soft computing}, 24\penalty0 (18):\penalty0 13797--13812, 2019.

\bibitem[Gatheral et~al.(2018)Gatheral, Jaisson, and Rosenbaum]{GJR}
J.~Gatheral, T.~Jaisson, and M.~Rosenbaum.
\newblock Volatility is rough.
\newblock \emph{Quantitative finance}, 18\penalty0 (6):\penalty0 933--949,
  2018.

\bibitem[Geweke and Porter-Hudak(1983)]{GPH}
J.~Geweke and S.~Porter-Hudak.
\newblock The estimation and application of long memory time series models.
\newblock \emph{Journal of time series analysis}, 4\penalty0 (4):\penalty0
  221--238, 1983.

\bibitem[Hagan et~al.(2002)Hagan, Kumar, Lesniewski, and Woodward]{SABR}
P.S. Hagan, D.~Kumar, A.~Lesniewski, and D.E. Woodward.
\newblock Managing smile risk.
\newblock \emph{Wilmott Magazine}, 1:\penalty0 84--108, 2002.

\bibitem[Heston(1993)]{H}
S.L. Heston.
\newblock A closed-form solution for options with stochastic volatility with
  applications to bond and currency options.
\newblock \emph{The Review of Financial Studies}, 6\penalty0 (2):\penalty0
  327--343, 1993.

\bibitem[Jacod and Protter(1998)]{JacodProtter98}
J.~Jacod and P.~Protter.
\newblock Asymptotic error distributions for the euler method for stochastic
  differential equations.
\newblock \emph{{The Annals of Probability}}, 26\penalty0 (1):\penalty0
  267--307, 1998.

\bibitem[Kolmogorov(1940)]{Kolmogorov}
A.~Kolmogorov.
\newblock The {W}iener spiral and some other interesting curves in {H}ilbert
  space.
\newblock \emph{Doklady akad. nauk {SSSR}}, 26\penalty0 (2):\penalty0 115--118,
  1940.

\bibitem[Kuck and Maderitsch(2019)]{KM}
K.~Kuck and R.~Maderitsch.
\newblock Intra-day dynamics of exchange rates: New evidence from quantile
  regression.
\newblock \emph{Quarterly review of economics and finance}, 71:\penalty0
  247--257, 2019.

\bibitem[Lallouache and Abergel(2013)]{LA}
M.~Lallouache and F.~Abergel.
\newblock Empirical properties of the foreign exchange interdealer market.
\newblock \emph{arXiv preprint}, 1307.5440, 2013.

\bibitem[Lo(1991)]{L}
A.W. Lo.
\newblock Long-term memory in stock market prices.
\newblock \emph{Econometrica}, 59\penalty0 (5):\penalty0 1279--1313, 1991.

\bibitem[Lo and MacKinlay(1988)]{LM}
A.W. Lo and A.C. MacKinlay.
\newblock Stock market prices do not follow random walks: Evidence from a
  simple specification test.
\newblock \emph{Review of Financial Studies}, 1\penalty0 (1):\penalty0 41--66,
  1988.

\bibitem[Mandelbrot and van Ness(1968)]{MvN}
B.~Mandelbrot and J.~van Ness.
\newblock Fractional {B}rownian motions, fractional noises and applications.
\newblock \emph{{SIAM} review}, 10\penalty0 (4):\penalty0 422--437, 1968.

\bibitem[Parkinson(1980)]{Parkinson}
M.~Parkinson.
\newblock The extreme value method for estimating the variance of the rate of
  return.
\newblock \emph{Journal of business}, 53\penalty0 (1):\penalty0 61--65, 1980.

\bibitem[Robert and Rosenbaum(2011{\natexlab{a}})]{RR1}
C.Y. Robert and M.~Rosenbaum.
\newblock A new approach for the dynamics of ultra-high-frequency data: The
  model with uncertainty zones.
\newblock \emph{Journal of Financial Econometrics}, 9\penalty0 (2):\penalty0
  344--366, 2011{\natexlab{a}}.

\bibitem[Robert and Rosenbaum(2011{\natexlab{b}})]{RR2}
C.Y. Robert and M.~Rosenbaum.
\newblock Volatility and covariation estimation when microstructure noise and
  trading times are endogenous.
\newblock \emph{Mathematical Finance}, 22\penalty0 (1):\penalty0 133--164,
  2011{\natexlab{b}}.

\bibitem[Rogers(2019)]{Rogers19}
L.C.G. Rogers.
\newblock Things we think we know.
\newblock \emph{Preprint, available at
  \textit{https://www.skokholm.co.uk/lcgr/downloadable-papers/}}, 2019.

\bibitem[Rootzen(1980)]{Rootzen80}
H.~Rootzen.
\newblock Limit distributions for the error in approximations of stochastic
  integrals.
\newblock \emph{{The Annals of Probability}}, 8\penalty0 (2):\penalty0
  241--251, 1980.

\bibitem[Stein and Stein(1994)]{SS}
E.~Stein and J.~Stein.
\newblock Stock price distributions with stochastic volatility: An analytic
  approach.
\newblock \emph{The Review of Financial Studies}, 4\penalty0 (4):\penalty0
  727--752, 1994.

\bibitem[\v{S}apina et~al.(2020)\v{S}apina, Garcin, Kramari\'c, Milas,
  Brdari\'c, and Piri\'c]{SGKMBP}
M.~\v{S}apina, M.~Garcin, K.~Kramari\'c, K.~Milas, D.~Brdari\'c, and
  M.~Piri\'c.
\newblock The {H}urst exponent of heart rate variability in neonatal stress,
  based on a mean-reverting fractional {L}\'evy stable motion.
\newblock \emph{Fluctuation and noise letters}, 19\penalty0 (3):\penalty0
  2050026, 2020.

\end{thebibliography}

\end{document}